\newtheorem{theorem}{Theorem}
\newtheorem{proposition}{Proposition}
\newtheorem*{lemma*}{Lemma}
\theoremstyle{definition}
	\newtheorem{definition}{Definition}
	\newtheorem{example}{Example}
	\newtheorem*{example*}{Example}
\newcommand{\change}[1]{{#1}}
\newcommand{\R}{\mathbb{R}}
\newcommand{\N}{\mathbb{N}}
\newcommand{\I}{\mathbb{I}}
\newcommand{\IR}{\mathbb{IR}}
\newcommand{\calI}{\mathcal{I}}
\newcommand{\calM}{\mathcal{M}}
\newcommand{\calP}{\mathcal{P}}
\newcommand{\calU}{\mathcal{U}}
\newcommand{\calX}{\mathcal{X}}
\newcommand{\bfx}{\mathbf{x}}
\newcommand{\ul}[1]{\underline{#1}}
\newcommand{\ula}{\ul{a}}
\newcommand{\ulb}{\ul{b}}
\newcommand{\ulf}{\ul{f}}
\newcommand{\ulw}{\ul{w}}
\newcommand{\ulx}{\ul{x}}
\newcommand{\uly}{\ul{y}}
\newcommand{\ol}[1]{\overline{#1}}
\newcommand{\ola}{\ol{a}}
\newcommand{\olb}{\ol{b}}
\newcommand{\olf}{\ol{f}}
\newcommand{\olw}{\ol{w}}
\newcommand{\olx}{\ol{x}}
\newcommand{\oly}{\ol{y}}
\newcommand{\ISTL}{$\text{I-STL}$\xspace}
\begin{document}

\title{Interval Signal Temporal Logic from Natural Inclusion Functions%
\author{Luke Baird, \IEEEmembership{Graduate Student Member, IEEE}, Akash Harapanahalli, \IEEEmembership{Graduate Student Member, IEEE}, and Samuel Coogan, \IEEEmembership{Senior Member, IEEE} }%
\thanks{
This work was supported by the National Science Foundation under grants 1749357 and 2219755 and the Air Force Office of Scientific Research under Grant FA9550-23-1-0303.
}
\thanks{L. Baird, A. Harapanahalli, and S. Coogan are with the Electrical and Computer Engineering Department at the Georgia Institute of Technology, Atlanta, GA 30318 USA. \texttt{\{lbaird38,aharapan,sam.coogan\}@gatech.edu}}
}

\pagestyle{empty}

\maketitle

\thispagestyle{empty}

\begin{abstract}
    We propose an interval extension of Signal Temporal Logic (STL) called Interval Signal Temporal Logic (\ISTL). Given an STL formula, we consider an interval inclusion function for each of its predicates. Then, we use minimal inclusion functions for the $\min$ and $\max$ functions to recursively build an interval robustness that is a natural inclusion function for the robustness of the original STL formula. The resulting interval semantics accommodate, for example, uncertain signals modeled as a signal of intervals and uncertain predicates modeled with appropriate inclusion functions. In many cases, verification or synthesis algorithms developed for STL apply to \ISTL with minimal theoretic and algorithmic changes,  and existing code can be readily extended using interval arithmetic packages at negligible computational expense. 
    To demonstrate \ISTL, we present an example of offline  monitoring from an uncertain signal trace obtained from a hardware experiment and an example of robust online control synthesis \change{enforcing an STL formula with uncertain predicates}. %
\end{abstract}
\begin{IEEEkeywords}
Autonomous systems, constrained control, fault detection.
\end{IEEEkeywords}

\section{Introduction}

\IEEEPARstart{S}{ignal} Temporal Logic (STL) is an expressive language  for encoding desired dynamic behavior of a system. STL specifications are built from predicate functions over the system output as well as Boolean and temporal connectives. For example, a warehouse robot may be required to visit  regions defined by predicate functions in a prescribed order and deadline, or a building HVAC system might be allowed to violate a prescribed temperature range for only a limited period of time. 
STL is equipped with both qualitative
logical semantics~\cite{OM-DN:2004} and
quantitative 
robustness semantics~\cite{GF-GP:2009} that quantify the margin by which a specification is violated or satisfied.

Two major applications of STL include monitoring and control synthesis. For monitoring, the goal is to determine whether a given signal satisfies an STL specification \cite{EB-etal:2018}. There are several available tools and algorithms in the literature for efficient monitoring of an STL specification \cite{AD:2010,AD-BH-GF:2014,JD-AD-SG-XJ-GJ-SS:2017}.
For control synthesis, the goal is to obtain a control strategy such that the resulting system output is guaranteed to satisfy a given STL specification. 
Control synthesis is \change{often} posed as an optimal control problem by including  the robustness metric in the cost or constraints. This problem is generally non-convex and non-smooth due to the composition of $\min$ and $\max$ appearing in the definition of the robustness metric and is often converted to a mixed-integer program \cite{CB-SS:2019,VK-HL:2022}. For example, a state-of-the-art mixed-integer linear program \change{(MILP)} for STL control sythesis over affine predicates with \change{linear} costs using a minimal number of binary variables is proposed in \cite{VK-HL:2022} and implemented in the \verb!stlpy! Python package. Alternate approaches to control synthesis include under-approximating the non-smooth robustness metric with a smooth approximation \cite{KL-NA-MP:2020, YG-VK-HL:2020} and using control barrier functions for certain fragments of STL \cite{LL-DD:2018, MC-DD:2022}.

One major challenge is accommodating uncertainty in the system dynamics, the system output, and/or the STL specification itself. 
A variation of STL called pSTL allows satisfaction or violation of a specification over a signal to occur with some probability \cite{DS-AK:2016}. Similarly, the paper \cite{RI-QH-ML:2023} propagates stochastic robustness intervals of STL robustness with linear predicates for safe motion planning. %
The paper \cite{BZ-CJ-JP:2021} proposes a monitoring algorithm that accommodates uncertainty and time perturbations using intervals for finite-horizon STL formulas but is limited to monitoring and does not consider uncertainty in the STL predicates. %
In the context of online monitoring, the paper \cite{JD-AD-SG-XJ-GJ-SS:2017} presents an algorithm where the robustness of a partial signal is predicted as an interval before an entire signal is observed so that satisfaction or violation can be reported early if zero robustness is not in the interval. The paper \cite{BF-MF-FK-PK:2022} develops an offline monitoring algorithm for handling common models of sensor uncertainty within an STL framework. %

The main contribution of this letter is an interval extension of STL called Interval-STL (\ISTL) to accommodate interval-valued uncertainty in the system or specification. \change{Note that we avoid probabilistic considerations such as~\cite{LL-LJ-NM-GP:2023} and use intervals to model uncertainty yielding formal guarantees.} The syntax and semantics of \ISTL are the same as STL except interval functions replace predicate functions and $\min$ and $\max$ are replaced with their \change{minimal} inclusion \change{function} counterparts, resulting in interval-valued quantitative robustness semantics and three-valued qualitative logical semantics for \ISTL. Unlike previous works, our construction accommodates uncertainty in the predicate functions themselves. Our main theorem is a soundness result establishing that the interval robustness of \ISTL over-approximates the usual STL robustness under any realization of the uncertainty, and similarly for the logical semantics. \change{We identify a class of specifications for which the \ISTL robustness interval is minimal.} A main feature of \ISTL is that, since its definition is built from \change{natural} inclusion functions and interval arithmetic, existing algorithms for STL are often easily extended to \ISTL using mature interval analysis packages at negligible computational expense. In particular, we extend \verb!stlpy! to \ISTL using our interval toolbox \texttt{npinterval}~\cite{AH-SJ-SC:2023}, and we demonstrate the resulting algorithms on two examples: monitoring an uncertain signal and synthesizing a controller for an uncertain system. 

\change{This} letter is outlined as follows. Section II presents mathematical preliminaries needed for the interval arithmetic and STL. Section III is the primary theoretic contribution of this letter describing \ISTL. Section IV gives a brief discussion of advantages and limitations of \ISTL. Section V provides examples of our method applied to monitoring and control synthesis followed by Section VI which concludes this letter.

\section{Mathematical Preliminaries}

\subsection{Notation}

We denote the standard partial order on $\R^n$ by $\le$, i.e., for $x,y\in\mathbb{R}^n$, $x\leq y$ if and only if $x_i \le y_i$ for all $i\in \{1,\ldots,n\}$.  
A (bounded) \emph{interval} of $\mathbb{R}^n$ is a set of the form $\{z : \ulx \le z \le \olx\}=:[\ulx,\olx]$ for some endpoints $\ulx,\olx\in\R^n$, $\ulx\leq \olx$.
Let $\IR^n$ denote the set of all intervals on $\R^n$. 
We also use the notation $[x]\in\IR^n$ to denote an interval when its endpoints are not relevant or implicitly understood to be $\ulx$ and $\olx$.
For a function $f:\R^n\to\R^m$ and a set $\calX\subseteq\operatorname{dom}(f)$, define the set valued extension $f(\calX):=\{f(x) : x\in\calX\}$.

A discrete-time signal in $\R^n$ is a function $\bfx:\N\to \mathbb{R}^n$ where $\N=\{0,1,2,\ldots\}$. A discrete-time interval signal in $\I\R^n$ is a function $[\bfx]:\N\to\IR^n$. If $\bfx$ and $[\bfx]$ are such that $\bfx(t)\in[\bfx](t)$ for all $t\in\N$, we write $\bfx\in[\bfx]$.

\subsection{Interval Analysis}

Interval analysis extends operations and functions to intervals~\cite{LJ-MK-OD-EW:01}. For example, if we know that $a\in[\ula,\ola]$, and $b\in[\ulb,\olb]$, it is easy to see that the sum $(a+b)\in[\ula+\ulb, \ola+\olb]$. The same idea extends to general functions, using an inclusion function to over-approximate its output.

\begin{definition}[Inclusion Function~\cite{LJ-MK-OD-EW:01}]
    \label{def:if}
    Given a function $f:\R^n\to\R^m$, the interval function $[f]=[\ulf,\olf]:\IR^n\to\IR^m$ is an
    \textit{inclusion function} for $f$ if, for every $[\ulx,\olx]\in\IR^n$, $f([\ulx,\olx]) \subseteq [f]([\ulx,\olx])$, or equivalently
    \[
        \ulf([\ulx,\olx]) \leq f(x) \leq \olf([\ulx,\olx])\quad \text{for all } x\in[\ulx,\olx].
    \]
    An inclusion function is \textit{minimal} if for every $[\ulx,\olx]$, $[f]([\ulx,\olx])$ is the smallest interval containing $f([\ulx,\olx])$, or equivalently
    \[
        [f]_i([\ulx,\olx]) = \left[ \inf_{x\in[\ulx,\olx]} f_i(x),\ \sup_{x\in[\ulx,\olx]} f_i(x)\right],
    \]
    for each $i\in\{1,\dots,m\}$.
\end{definition}

Of particular relevance to this letter are the minimal inclusion functions for $\min$ and $\max$.

\begin{proposition} \label{prop:minmaxminimal}
The minimal inclusion functions for $\min(x_1,x_2)$ and for $\max(x_1,x_2)$ with $x_1\in[\ulx_1,\olx_1]\in\IR$, $x_2\in[\ulx_2,\olx_2]\in\IR$, denoted as $[\min]$ and $[\max]$, are given by
    \begin{align} 
    \label{eq:mininc}[\min]([x_1],[x_2]) &= [\min(\ulx_1,\ulx_2), \min(\olx_1,\olx_2)], \\
    \label{eq:maxinc}[\max]([x_1],[x_2]) &= [\max(\ulx_1,\ulx_2), \max(\olx_1,\olx_2)].
\end{align}
Moreover, $[\min]$ and $[\max]$ extend inductively to multiple arguments in the usual way, \emph{e.g.}, $[\min]([x_1],[x_2],[x_3])=[\min(\ulx_1,\ulx_2,\ulx_3), \min(\olx_1,\olx_2,\olx_3)]$, etc.
\end{proposition}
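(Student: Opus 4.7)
The plan is to verify, for each claimed formula, both that it defines a valid inclusion function in the sense of Definition~\ref{def:if} and that the inclusion is minimal, and then to extend to the multi-argument case by induction. The structural fact driving the entire argument is that $\min$ and $\max$ are continuous and componentwise monotone nondecreasing on their arguments.

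First, for the inclusion property, I would argue directly from monotonicity: given $x_1 \in [\ulx_1,\olx_1]$ and $x_2 \in [\ulx_2,\olx_2]$, monotonicity of $\min$ in each argument yields $\min(\ulx_1,\ulx_2) \leq \min(x_1,x_2) \leq \min(\olx_1,\olx_2)$, and the identical reasoning applies to $\max$. This shows the claimed intervals in \eqref{eq:mininc} and \eqref{eq:maxinc} contain the image of the respective operation on the box $[\ulx_1,\olx_1]\times[\ulx_2,\olx_2]$, so both are inclusion functions.

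Second, for minimality, the key observation is that the claimed lower and upper endpoints are actually \emph{attained} by specific corner points of the input box. For $[\min]$, the lower endpoint $\min(\ulx_1,\ulx_2)$ is realized at $(x_1,x_2)=(\ulx_1,\ulx_2)$ and the upper endpoint $\min(\olx_1,\olx_2)$ at $(\olx_1,\olx_2)$; an analogous pair of corners witnesses both endpoints for $[\max]$. Since $\min$ and $\max$ are continuous on the connected domain $[\ulx_1,\olx_1]\times[\ulx_2,\olx_2]$, their images are connected subsets of $\R$, hence intervals, and therefore equal the closed interval between the witnessed endpoints. By Definition~\ref{def:if}, this establishes minimality.

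Finally, for the multi-argument extension, I would proceed by induction on the arity $k$, writing $\min(x_1,\dots,x_k)=\min(\min(x_1,\dots,x_{k-1}),x_k)$ and similarly for $\max$. The main subtlety to keep an eye on is that nested evaluations of interval-valued operations can in general lose minimality (the dependency problem of interval arithmetic); the reason this is not an issue here is that the inductive hypothesis supplies an inner interval whose lower and upper endpoints are simultaneously witnessed by the consistent corners $(\ulx_1,\dots,\ulx_{k-1})$ and $(\olx_1,\dots,\olx_{k-1})$, which combine without conflict with the remaining argument's own corner witnesses. Alternatively — and this is the cleanest route I would actually write out — one can bypass induction entirely by repeating the two-argument monotonicity-plus-continuity argument directly for the $k$-argument $\min$ and $\max$ on the box $\prod_{i=1}^k[\ulx_i,\olx_i]$, immediately yielding the closed-form expressions claimed.
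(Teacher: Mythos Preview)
Your argument is correct. The paper does not actually supply a proof of this proposition; it is stated without proof, with the surrounding text only remarking that ``if a function is monotone, the minimal inclusion function is simply the interval created by the function evaluated at its endpoints.'' Your monotonicity-plus-continuity argument is exactly the natural elaboration of that remark, so your approach is fully aligned with what the paper implicitly intends. The only minor comment is that the continuity/connectedness step, while correct, is slightly more than needed: once you have exhibited corner witnesses for both endpoints and established the inclusion, minimality in the sense of Definition~\ref{def:if} follows immediately (the definition requires the smallest \emph{interval} containing the image, not that the image itself be an interval).
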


For some common functions, the minimal inclusion function is easily defined. For example, if a function is monotone, the minimal inclusion function is simply the interval created by the function evaluated at its endpoints. 
However, when considering general functions, finding the minimal inclusion function is often not computationally viable. The following proposition provides a more computationally tractable approach.

\begin{proposition}[Natural Inclusion Functions]
    \label{thm:nif}
    Given a function $f:\R^n\to\R^m$ defined by a composition of functions/operations with known inclusion functions as $f = e_\ell \circ e_{\ell-1} \circ \cdots \circ e_1$, an inclusion function for $f$ is formed by replacing each composite function with its inclusion function as $[f] = [e_\ell] \circ [e_{\ell-1}] \circ \cdots \circ [e_1]$, and is called a natural inclusion function. 
\end{proposition}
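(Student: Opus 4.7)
The plan is to prove the statement by induction on $\ell$, the number of functions in the composition. The base case $\ell = 1$ is immediate: the claim says $[f] = [e_1]$ is an inclusion function for $f = e_1$, which holds by assumption.

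For the inductive step, I would assume the result for compositions of length $\ell - 1$ and set $g := e_{\ell-1} \circ \cdots \circ e_1$, so that by the inductive hypothesis, $[g] := [e_{\ell-1}] \circ \cdots \circ [e_1]$ is an inclusion function for $g$. Fix any $[\ulx, \olx] \in \IR^n$. By the inductive hypothesis, $g([\ulx,\olx]) \subseteq [g]([\ulx,\olx])$. The key observation is that $[g]([\ulx,\olx])$ is itself an interval, call it $[\uly, \oly]$, so in particular $e_\ell$ can be applied to it as a set and
\[
    f([\ulx,\olx]) = e_\ell(g([\ulx,\olx])) \subseteq e_\ell([\uly,\oly]).
\]
Then, since $[e_\ell]$ is an inclusion function for $e_\ell$ by assumption, $e_\ell([\uly,\oly]) \subseteq [e_\ell]([\uly,\oly]) = [e_\ell]([g]([\ulx,\olx])) = [f]([\ulx,\olx])$. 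Chaining these inclusions yields $f([\ulx,\olx]) \subseteq [f]([\ulx,\olx])$, which is exactly the inclusion-function property from Definition~\ref{def:if}.

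The argument is essentially a bookkeeping exercise, so no real obstacle arises; the only subtle point is the type-matching observation that the output of the inclusion function $[g]$ is a genuine interval in $\IR$, which is precisely what allows the next inclusion function $[e_\ell]$ in the chain to consume it. This is why the natural inclusion function is well-defined as a composition in the interval category: outputs of inclusion functions are intervals, and inclusion functions act on intervals. I would end the proof by noting that this construction does not generally yield the minimal inclusion function, since each step can only over-approximate, and these over-approximations accumulate through the composition — a point worth flagging since the paper later contrasts minimality versus natural inclusion.
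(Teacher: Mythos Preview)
Your induction argument is correct and is the standard proof of this fact. Note, however, that the paper does not actually give its own proof of this proposition: it is stated as a known result from interval analysis, with a pointer to~\cite[Section 2.4]{LJ-MK-OD-EW:01} for details, so there is no paper-side proof to compare against. Your write-up would serve perfectly well as the omitted argument; the closing remark about non-minimality is also apt and aligns with the paper's later discussion.
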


Existing software tools such as \verb|CORA|~\cite{MA-GD:2016} and \verb|npinterval|~\cite{AH-SJ-SC:2023} automate the construction of natural inclusion functions from general functions.
We refer to~\cite[Section 2.4]{LJ-MK-OD-EW:01} for further discussion and other techniques to obtain other inclusion functions.%

\subsection{Signal Temporal Logic}

Signal Temporal Logic (STL) is defined over a set $\mathcal{P}$ of \emph{predicate functions} where each $\mu\in \mathcal{P}$ is a function $\mu:\mathbb{R}^n\to \mathbb{R}$. STL specifications are formed  
using the  syntax \cite{YG-VK-HL:2020, CB-SS:2019}
\begin{equation}\label{eq:stl_syntax}
     \phi\triangleq (\mu(x)\geq 0) | \lnot\phi | \phi\land\psi | \phi\mathcal{U}_{[t_1,t_2]}\psi
\end{equation}
where $\mu\in \mathcal{P}$. 
The operators conjunction $\land$, until $\mathcal{U}$, and negation $\lnot$ may be used to define disjunction $\lor$, eventually $\Diamond$, and always $\Box$. We occasionally write $\phi_{\mathcal{P}}$ to emphasize that $\phi$ is over the set of predicate functions $\mathcal{P}$.

An STL specification $\phi$ is evaluated over a discrete-time signal $\bfx : \N \to \R^n$. %
The quantitative robustness $\rho^\phi$ of a specification $\phi$ evaluated over signal $\bfx$ at time $t\in\N$ is \change{defined and} calculated recursively as \change{in \cite[Definition 1]{YG-VK-HL:2020}}.

Qualitative semantics of STL formula $\phi$ evaluated over signal $\bfx$ are recovered from the robustness as \cite{YG-VK-HL:2020}
\begin{align}\label{eq:stl_semantics}
    \big[\bfx\models \phi\big]=\begin{cases}
        \textsc{True}&\text{ if $\rho^\phi(\bfx,0) \geq 0$}\\
        \textsc{False}&\text{ if $\rho^\phi(\bfx,0) < 0$}.
    \end{cases}
\end{align}
\change{Note that we adopt the convention that if $\rho^\phi(\bfx,0)=0$, then $\big[\bfx\models\phi\big]=\textsc{True}$, although this case is sometimes considered an undefined truth evaluation in the literature.} %

\section{Interval Signal Temporal Logic}

In standard STL, the robustness $\rho^\phi$ of a specification $\phi$ evaluated over a signal $\bfx$ at a time $t$ is a single number. With the aim of incorporating \change{bounded} uncertainty in signal values and in predicate functions, in this section, we define and characterize \emph{Interval Signal Temporal Logic} \change{(\ISTL)} that is evaluated over interval signals and whose quantitative semantics give an  interval of robustness. \change{We connect this to an original STL specification by defining an induced I-STL specification given inclusion functions for the predicates.}

\change{\ISTL} is defined over a set of \emph{interval predicate functions} $\mathcal{I}$ where each $\calM\in\mathcal{I}$ is an interval function $\calM:\I\R^n\to\I\R$.
\ISTL syntax is the same as STL except we exchange predicate functions for interval predication functions.

\begin{definition}(\ISTL Syntax)
Given a set $\calI$ of interval predicate functions, \ISTL syntax is defined by
    \begin{equation}
        \phi \triangleq (\change{\calM([x])  \subseteq [0,\infty]}) | \lnot \phi | \phi \land \psi | \phi \calU_{[t_1, t_2]}\psi
    \end{equation}
    for $\calM\in\calI$. 
    \end{definition}

An \ISTL specification $\phi$ is evaluated over a discrete-time interval signal $[\bfx] : \N \to \IR^n$ where $[\bfx](t)\in\I\R^n$ for each time $t\in\N$. Using the  minimal inclusion functions $[\min]$ and $[\max]$ given in \eqref{eq:mininc} and \eqref{eq:maxinc}, 
we now define the 
quantitative interval robustness semantics of \ISTL  as follows. 

\begin{definition}(\ISTL Quantitative Semantics)
    \label{def:istl_quantitative_semantics}
    The \emph{interval robustness} $[\rho]^\phi$ of an \ISTL specification $\phi$ evaluated over an interval signal $[\bfx]$ at time \change{step} $t$ is calculated recursively \change{using natural inclusion functions} as
    \begin{align}\label{eq:istl_semantics}
    \begin{aligned}
        &[\rho]^\Pi([\bfx], t) &&= \calM([\bfx](t)),\quad \text{ $\Pi = \change{(\calM([x]) \subseteq [0,\infty])}$ } \\ %
        &[\rho]^{\lnot \phi}([\bfx], t) &&= -[\rho]^\phi([\bfx], t)\\
        &[\rho]^{\phi \land \psi}([\bfx], t) &&= [\min]\big( [\rho]^\phi([\bfx],t) , [\rho]^\psi([\bfx], t) \big)\\
        &[\rho]^{\phi \lor \psi}([\bfx], t) &&= [\max]\big( [\rho]^\phi([\bfx],t) , [\rho]^\psi([\bfx], t) \big)\\
        &[\rho]^{\Box_{[t_1, t_2]}\phi}([\bfx], t) &&= \underset{t' \in [t+t_1, t+t_2]}{[\min]}\big([\rho]^\phi([\bfx], t')\big)\\
        &[\rho]^{\Diamond_{[t_1, t_2]}\phi}([\bfx], t) &&= \underset{t' \in [t+t_1, t+t_2]}{[\max]}\big([\rho]^\phi([\bfx], t')\big)\\
        &[\rho]^{\phi\mathcal{U}_{[t_1, t_2]}\psi}([\bfx],t) &&\\
        &= \underset{t'\in[t+t_1, t+t_2]}{[\max]} [\min]{\left([\rho]^\phi([\bfx], t'), \underset{t''\in[t+t_1, t']}{[\min]}{\left([\rho]^\psi([\bfx], t'')\right)}\right)}.\hspace{-4in}
    \end{aligned}
    \end{align}
\end{definition}

We also define three-valued logical semantics %
from the quantitative interval semantics as follows. 
\begin{definition}(\ISTL Three-Valued Logical Semantics)
The truth-value of \ISTL formula $\phi$ evaluated over interval signal $[\bfx]$ is denoted $\big[[\bfx] \models \phi\big]$ and given by
\begin{align}
    \big[[\bfx]\models \phi\big]=\begin{cases}
        \textsc{True}&\text{if $[\rho]^\phi(\bfx,0)\subseteq [0,\infty]$}\\
        \textsc{False}&\text{if $[\rho]^\phi(\bfx,0)\subseteq [-\infty,0)$}\\
        \textsc{Undef}&\text{else}. %
    \end{cases}
\end{align}
    
\end{definition}

We now establish the key property of \ISTL: it provides interval bounds on the robustness of an STL specification given interval uncertainty in the predicate functions and/or signal.

\begin{definition}[Predicate interval extensions]
    Given a set of predicate functions $\calP$, a set of interval predicate functions $\calI$ is an \textit{interval extension} of $\calP$ if for each $\mu\in\calP$ there exists a $\calM\in\calI$ such that $\calM$ is an inclusion function for $\mu$. %
\end{definition}
\change{

\begin{example}
Consider the predicate function $\mu:\R^n\to\R$ such that $\mu(x) := \|x\|_2^2 - r = \sum_{i=1}^n x_i^2 - r$, representing, \textit{e.g.}, a circular obstacle. Then an interval predicate function $\calM:\IR^n \to \IR$ can be constructed following the framework from~\cite{AH-SJ-SC:2023}:
for each $i=1,\dots,n$, define $\uly_i := \begin{cases} 0 & \ulx_i \leq 0 \leq \olx_i \\ \min(\ulx_i^2,\olx_i^2) & \text{otherwise} \end{cases}$, and $\oly_i := \max(\ulx_i^2,\olx_i^2)$; then, $\calM([\ulx,\olx]) = \left[\sum_{i=1}^n\uly_i - r, \sum_{i=1}^n \oly_i - r\right]$ is an inclusion function for $\mu(x)$.
\end{example}

}

When $\calI$ is an interval extension of $\calP$, we can obtain an \ISTL specification over $\calI$ from an STL specification $\phi$ over $\calP$ by replacing every instance of a predicate function $\mu$ with the corresponding $\calM$.
\begin{definition}[Induced \ISTL specification] \label{def:induced_istl}
    Given an STL specification $\phi_\calP$ over the set of predicate functions $\calP$ and a set of interval predicate functions $\calI$ that is an extension of $\calP$, the \ISTL specification that is obtained by replacing each instance of a predicate function $\mu(x)$
    in $\phi_\calP$ with the corresponding interval predicate function $\calM([x])$ is the \ISTL specification over $\calI$ \emph{induced by} $\phi_\calP$ and is denoted $\phi_\calI$. When no confusion arises, we sometimes drop the subscript and write $\phi$ for an STL specification and its induced \ISTL specification.
\end{definition}

We now present the main theoretical result of this letter, linking the semantics of an STL specification to the semantics of its induced \ISTL specification.

\begin{theorem}[Soundness of Quantitative Semantics]
    \label{thm:soundness_istl_semantics}
    Let $\phi_\calP$ be an STL specification over the 
    set of predicate functions $\calP$. Let $\calI$ be an interval extension of $\calP$ and \change{let} $\phi_\calI$ \change{be} the \ISTL specification over $\calI$ induced by $\phi_\calP$. Then, for any interval signal $[\bfx]:\N\to \IR^n$ and any signal $\bfx\in[\bfx]$, it holds that
    \begin{equation}\label{eq:theorem_1_first_statement}
        \rho^{\phi_{\calP}}(\bfx,t) \in [\rho]^{\phi_{\calI}}([\bfx],t) \quad \text{for all } t.
    \end{equation}
    Moreover, 
    \begin{align}\label{eq:theorem_part_2}
    \begin{aligned}
        \big[[\bfx] \models \phi_{\calI}\big]&=\textsc{True}&&\text{ implies }\quad \big[\bfx \models \phi_\calP\big]=\textsc{True},\text{ and}\\
        \big[[\bfx] \models \phi_{\calI}\big]&=\textsc{False}&&\text{ implies } \quad\big[\bfx \models \phi_\calP\big]=\textsc{False}.
    \end{aligned}
    \end{align}
\end{theorem}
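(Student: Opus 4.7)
The plan is to prove the quantitative statement \eqref{eq:theorem_1_first_statement} by structural induction on the syntax tree of $\phi_{\calP}$, and then to derive the qualitative statement \eqref{eq:theorem_part_2} as an immediate corollary. The conceptual observation driving the whole argument is that Definition~\ref{def:istl_quantitative_semantics} is constructed so that $[\rho]^{\phi_{\calI}}([\bfx],t)$ is literally a natural inclusion function (in the sense of Proposition~\ref{thm:nif}) for $\rho^{\phi_{\calP}}(\bfx,t)$, because every operation appearing in the standard recursive definition of $\rho^{\phi_{\calP}}$ (namely $\mu$, negation, $\min$, and $\max$) has been replaced with a corresponding inclusion function ($\calM$, negation on intervals, $[\min]$, and $[\max]$). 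Thus the claim is essentially that natural inclusion functions do what Proposition~\ref{thm:nif} says they do.

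More concretely, I would fix an arbitrary signal $\bfx \in [\bfx]$ and induct on formula structure. For the base case $\Pi = (\mu(x) \geq 0)$, the inclusion property of $\calM$ for $\mu$ together with $\bfx(t) \in [\bfx](t)$ gives $\mu(\bfx(t)) \in \calM([\bfx](t))$, which is exactly $\rho^{\Pi}(\bfx,t) \in [\rho]^{\Pi}([\bfx],t)$. For $\neg\phi$, the claim follows because negation of an interval is an inclusion function for negation of a real number, so $\rho^{\phi}(\bfx,t) \in [\rho]^{\phi}([\bfx],t)$ implies $-\rho^{\phi}(\bfx,t) \in -[\rho]^{\phi}([\bfx],t)$. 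For $\phi \land \psi$ and $\phi \lor \psi$, I invoke Proposition~\ref{prop:minmaxminimal}: the inductive hypothesis places both component robustnesses in their interval counterparts, and then the inclusion property of $[\min]$ (respectively $[\max]$) closes the step. The $\Box_{[t_1,t_2]}$, $\Diamond_{[t_1,t_2]}$, and until cases are handled identically, using that $[\min]$ and $[\max]$ extend inductively to any finite number of arguments per Proposition~\ref{prop:minmaxminimal}; the nested structure of the until operator does not create any new difficulty, since each nested $[\min]$ or $[\max]$ is itself an inclusion function for the corresponding real-valued operation.

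For the qualitative part \eqref{eq:theorem_part_2}, I would argue directly from \eqref{eq:theorem_1_first_statement} evaluated at $t=0$. If $[\rho]^{\phi_{\calI}}([\bfx],0) \subseteq [0,\infty]$, then by \eqref{eq:theorem_1_first_statement} we have $\rho^{\phi_{\calP}}(\bfx,0) \in [0,\infty]$, i.e., $\rho^{\phi_{\calP}}(\bfx,0) \geq 0$, which by the STL qualitative semantics \eqref{eq:stl_semantics} gives $[\bfx \models \phi_{\calP}] = \textsc{True}$. The $\textsc{False}$ case is symmetric: $[\rho]^{\phi_{\calI}}([\bfx],0) \subseteq [-\infty,0)$ forces $\rho^{\phi_{\calP}}(\bfx,0) < 0$ and thus $[\bfx \models \phi_{\calP}] = \textsc{False}$. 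The $\textsc{Undef}$ case is vacuous since the theorem only asserts implications for the two decisive truth values.

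I expect no substantive obstacle: the only subtlety is bookkeeping for the until operator, where one must verify that the outer $[\max]$ over $t'$ and the two inner $[\min]$'s can be composed cleanly and still yield an inclusion function for the real-valued nested expression; this reduces to repeated application of Proposition~\ref{prop:minmaxminimal} together with the inductive hypothesis applied at each time index $t', t'' \in [t+t_1, t+t_2]$. A minor point worth handling carefully is the edge case $\rho^{\phi_{\calP}}(\bfx,0) = 0$, where the convention adopted after \eqref{eq:stl_semantics} (treating zero robustness as $\textsc{True}$) ensures consistency with the convention that $[\rho]^{\phi_{\calI}}([\bfx],0) \subseteq [0,\infty]$ implies $\textsc{True}$.
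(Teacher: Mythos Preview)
Your proposal is correct and follows essentially the same approach as the paper: the paper invokes Proposition~\ref{thm:nif} directly to conclude that $[\rho]^{\phi_\calI}$ is a natural inclusion function for $\rho^{\phi_\calP}$, while you explicitly unroll the structural induction that Proposition~\ref{thm:nif} packages, handling each syntactic case in turn. The qualitative part is argued identically in both, by specializing \eqref{eq:theorem_1_first_statement} to $t=0$ and reading off the implications from the endpoint signs.
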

\begin{proof}
    Because each $\calM \in \calI$ is a inclusion function for its corresponding predicate function $\mu \in \calP$ and $[\min]$ and $[\max]$ are inclusion functions, each equation in \eqref{eq:istl_semantics} is an inclusion function for the corresponding equation \change{in~\cite[Definition 1]{YG-VK-HL:2020}} %
    by Proposition~\ref{thm:nif}.
    \change{Thus, $[\rho]^{\phi_\calI}$ is a natural inclusion function for $\rho^{\phi_\calP}$, immediately implying~\eqref{eq:theorem_1_first_statement}.} For \eqref{eq:theorem_part_2}, we observe that
    \[
        \big[[\bfx] \models \phi_\calI\big]=\textsc{True} \implies \underline{\rho}^{\phi_\calI}([\bfx], 0) \geq 0
    \]
    so by \eqref{eq:theorem_1_first_statement}, $\rho^{\phi_\calP}(\bfx, 0) \geq 0$, that is, $\big[x \models \phi_\calP\big] = \textsc{True}$. Symmetrically,
    \[
        \big[[\bfx] \models \phi_\calI\big]=\textsc{False} \implies \overline{\rho}^{\phi_\calI}([\bfx], 0) < 0
    \]
    so by \eqref{eq:theorem_1_first_statement}, $\rho^{\phi_\calP}(\bfx, 0) < 0$, that is, $\big[x \models \phi_\calP\big] = \textsc{False}$
    where $\underline{\rho}^{\phi_\calI}$ and $\overline{\rho}^{\phi_\calI}$ are the lower and upper-bounds of $[\rho]^{\phi_\calI}$, that is, $[\rho]^{\phi_{\calI}}([\bfx],0) = [\underline{\rho}^{\phi_\calI}([\bfx], 0), \overline{\rho}^{\phi_\calI}([\bfx], 0)]$.
\end{proof}

\change{Note that if $\big[[\bfx] \models \phi \big] = \textsc{Undef}$ we cannot say anything about the truth value of $\big[\bfx \models \phi\big]$.}
\change{Although in general the inclusion function $[\rho]^{\phi_\calI}([x],t)$ is not minimal, we give an example of a class of specifications for which the \ISTL robustness is minimal.
\begin{proposition}
    Let $\calP$ be a set of monotone non-decreasing predicate functions $\mu_j$ with associated minimal inclusion functions $\calM_j\in\calI$.
    If the specification $\phi_\calP$ has no negations, then the interval robustness $[\rho]^{\phi_\calI}$ of the induced \ISTL specification $\phi_\calI$ from Definition~\ref{def:induced_istl} is the minimal inclusion function for the STL robustness $\rho^{\phi_\calP}$ of $\phi_\calP$.
\end{proposition}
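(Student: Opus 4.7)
The plan is to perform a structural induction on $\phi_\calP$ to establish two facts jointly: (i) under the stated hypotheses, for each signal $\bfx \in [\bfx]$, the map $\bfx \mapsto \rho^{\phi_\calP}(\bfx,t)$ is monotone non-decreasing in the pointwise order on signals; and (ii) the interval robustness satisfies
\[
[\rho]^{\phi_\calI}([\bfx],t) \;=\; \bigl[\rho^{\phi_\calP}(\ulx,t),\ \rho^{\phi_\calP}(\olx,t)\bigr],
\]
where $\ulx,\olx$ denote the lower and upper endpoint signals of $[\bfx]$. Since $\ulx,\olx \in [\bfx]$, combined with Theorem~\ref{thm:soundness_istl_semantics} and the monotonicity in (i), the right-hand side of (ii) is exactly the smallest interval containing $\{\rho^{\phi_\calP}(\bfx,t) : \bfx\in[\bfx]\}$, which gives minimality.

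For the base case, I take $\phi_\calP = (\mu_j(x) \geq 0)$. Since $\mu_j$ is monotone non-decreasing and $\calM_j$ is its minimal inclusion function, we have $\calM_j([\ulx(t),\olx(t)]) = [\mu_j(\ulx(t)), \mu_j(\olx(t))]$, establishing both (i) and (ii) at the atomic level. For the inductive step, negation is excluded by hypothesis, so each remaining operator ($\land$, $\lor$, $\Box$, $\Diamond$, and $\calU$) is constructed purely from compositions of $\min$ and $\max$ over the $\rho$-values of subformulas. Since $\min$ and $\max$ are themselves monotone non-decreasing in every argument, (i) is preserved. For (ii), I would use Proposition~\ref{prop:minmaxminimal}, which gives the minimal inclusion functions explicitly as endpoint-wise $\min$ and $\max$, and combine it with the inductive hypothesis. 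For example, for conjunction,
\begin{align*}
[\rho]^{\phi\land\psi}([\bfx],t) &= [\min]\bigl([\rho]^\phi([\bfx],t),\,[\rho]^\psi([\bfx],t)\bigr)\\
&= \bigl[\min(\rho^\phi(\ulx,t),\rho^\psi(\ulx,t)),\ \min(\rho^\phi(\olx,t),\rho^\psi(\olx,t))\bigr]\\
&= \bigl[\rho^{\phi\land\psi}(\ulx,t),\ \rho^{\phi\land\psi}(\olx,t)\bigr],
\end{align*}
and the other operators follow by an identical argument applied to the nested $\min$/$\max$ expressions in Definition~\ref{def:istl_quantitative_semantics}.

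The only step requiring some care will be the until operator $\phi\,\calU_{[t_1,t_2]}\,\psi$, since its robustness involves two nested min--max computations indexed over time windows. Here I would observe that the endpoint-wise minimality of $[\min]$ and $[\max]$ extends inductively to any finite nested composition of these operators, so the same endpoint identification used above goes through verbatim. Finally, I would remark why the no-negation hypothesis is essential: a negation would flip monotonicity, and although one could handle this by tracking both increasing and decreasing dependencies, Proposition~\ref{prop:minmaxminimal} would no longer yield the correct minimal endpoints after negation. This, together with the assumption that each $\mu_j$ is monotone non-decreasing with minimal inclusion function $\calM_j$, is precisely what makes the natural inclusion function minimal.
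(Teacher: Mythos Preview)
Your proposal is correct and follows essentially the same route as the paper: establish that $\rho^{\phi_\calP}$ is monotone non-decreasing in the signal because it is a composition of $\min$, $\max$, and monotone predicates, then use Proposition~\ref{prop:minmaxminimal} to identify the interval robustness with the endpoint evaluation. Your version is more explicit---the paper's proof only writes out the $\lor$ case at a single time point and then asserts the induction, whereas you carry the structural induction over signals and treat the temporal operators (including $\calU$) more carefully; the invocation of Theorem~\ref{thm:soundness_istl_semantics} is not strictly needed since monotonicity alone already pins the extrema at $\ulx$ and $\olx$, but it does no harm.
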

\begin{proof}
    \ \ Let $\mu_1,\mu_2:\R^n\to\R$ be monotone non-decreasing predicate functions with minimal inclusion functions $\calM_1,\calM_2$. Given an interval $[\ulx, \olx]$, due to monotonicity, the interval robustness from \ISTL of $\mu_1\lor\mu_2$ is
    \begin{align*}
        [\rho]^{\mu_1\lor\mu_2}([\ulx,\olx]) 
        &= \big[\max\big(\min_{x\in[\ulx,\olx]}\mu_1(x), \min_{x\in[\ulx,\olx]}\mu_2(x)\big),\\
        &\quad\quad \max\big(\max_{x\in[\ulx,\olx]}\mu_1(x), \max_{x\in[\ulx,\olx]}\mu_2(x)\big)\big], \\
        &= [\max(\mu_1(\ulx), \mu_2(\ulx)), \max(\mu_1(\olx), \mu_2(\olx))], \\
        &= \big[\min_{x\in[\ulx,\olx]} \rho^{\mu_1\lor\mu_2}(x), \max_{x\in[\ulx,\olx]} \rho^{\mu_1\lor\mu_2}(x)\big].
    \end{align*}
    This follows as $\min$ and $\max$ are monotone non-decreasing, so the composition with $\mu_1$ and $\mu_2$ is monotone non-decreasing. The same holds for $\land$ and $\min$. Every operator in \ISTL is a composition of $\min$ and $\max$, thus it holds inductively that $[\rho]^{\phi_\calI}$ is a minimal inclusion function for $\rho^{\phi_\calP}$.
\end{proof}
}

\section{Computational Considerations of \ISTL}

In practice, \ISTL specifications most naturally arise by incorporating uncertainty in settings with STL constraints. 
Aside from the theoretical soundness guarantees of Theorem \ref{thm:soundness_istl_semantics}, a key feature of \ISTL is that, algorithmically, it is often straightforward to modify existing STL algorithms such as offline monitoring, online monitoring, and control synthesis to incorporate the quantitative semantics in Definition \ref{def:istl_quantitative_semantics}. Concretely, as we demonstrate in the case studies, this is often as simple as replacing appropriate numerical computations with their interval counterparts using existing interval arithmetic computation packages, and in many settings, the increase in computational effort is negligible. A contribution of this letter, therefore, is an extension of the \texttt{stlpy} package for STL monitoring and control synthesis \cite{VK-HL:2022} to allow for \ISTL monitoring and control synthesis using our interval arithmetic package \texttt{npinterval} \cite{AH-SJ-SC:2023}, which implements intervals as a native datatype in the Python \texttt{numpy} package.

For example, consider a setting in which an STL specification is given over known and fixed predicate functions $\calP$. Suppose the objective is to monitor offline (\emph{i.e.}, after all measurements are collected) the robustness of $\phi$ evaluated over a signal, but
 the true signal is not known exactly---with this uncertainty captured in the interval signal $[\bfx]$ instead. In this case, we construct a set of interval predicate functions $\calI$ as\change{, \emph{e.g.}, the natural inclusion functions} of the original predicate functions, $\calI=\{[\mu]\mid \mu \in \calP\}$, and then $[\rho]^\phi$ becomes an inclusion function for $\rho^\phi$. %

We generalize further and consider a setting in which the predicate functions are parameter-dependent, and the parameter is not known exactly but known to be within an interval. For example, consider an affine predicate of the form $\mu(x) = a^\top x - b$
for $a\in\mathbb{R}^n$ and $b\in\mathbb{R}$. If $a$ and $b$ are uncertain and only known to be within the intervals $[a]$ and $[b]$, it is natural to consider an interval predicate
\begin{equation}
\label{eq:intaffine}
    \calM([x]) = [a]^\top [x] - [b].
\end{equation}

As an example, instantiating the predicate $\mu(x)=a^\top x-b$ in \texttt{stlpy} is achieved with, \emph{e.g.},
\begin{verbatim}
stlpy.STL.LinearPredicate(a,b)    
\end{verbatim}
for \texttt{numpy} arrays \texttt{a} and \texttt{b}. Creating the interval predicate \eqref{eq:intaffine} is achieved with
\begin{verbatim}
a_int = interval.get_iarray(_a,a_)
b_int = interval.get_iarray(_b,b_)
stlpy.STL.LinearPredicate(a_int,b_int)
\end{verbatim}
where \verb!_a!, \verb!a_!, \verb!_b!, \verb!b_! are \texttt{numpy} arrays for the lower and upper endpoints of $[a]$ and $[b]$, and \verb!get_iarray! returns the \texttt{numpy} array of the \texttt{interval} data type.

More generally, given a parameterized predicate function of the form $\mu(x,p)$ where $p\in\mathbb{R}^m$ is an unknown parameter vector known to be within the interval $[p]$, we take as an interval extension the interval predicate function $\calM([x])=[\mu]([x],[p])$ where $[\mu]$ is any inclusion function for $\mu$.

For example, given a parameterized Python function $\verb!mu_p!: \R^\texttt{n}\times\R^m\to\R$ with fixed $\texttt{p}\in\R^m$,
\begin{verbatim}
mu = lambda x : mu_p(x, p=p)
stlpy.STL.NonlinearPredicate(mu,n)
\end{verbatim}
instantiates a nonlinear predicate paramerized by \texttt{numpy} array \texttt{p}. Comparatively, the code 
\begin{verbatim}
p_int = interval.get_iarray(_p,p_)
M = lambda x : mu_p(x, p=p_int)
stlpy.STL.NonlinearPredicate(M,n)
\end{verbatim}
instantiates a nonlinear interval predicate obtained from the natural inclusion \change{function} of the parameterized predicate function $\verb!mu_p!$ evaluated with an uncertain parameter in the interval $\verb!p_int!:=[$\verb!_p!$,\ $\verb!p_!$]\in\IR^m$. Note that \texttt{npinterval} automatically builds a natural inclusion function for \texttt{mu} when arrays of \texttt{interval} data-type are passed into the function.

We demonstrate this construction and its application in the examples in Section~\ref{sec:examples}. We also illustrate how \ISTL can be used for enforcing safety specifications due to the construction from inclusion functions.

\section{Examples}\label{sec:examples}

In this section, we provide two example use cases of \ISTL. First, we demonstrate monitoring on a signal measured from an experiment with a miniature blimp. We consider both linear and nonlinear uncertain predicates and measurement uncertainty. Then, we show how \ISTL can be used in conjunction with theory from \cite{LB-SC:2023} for control synthesis of a linear system.
Because our implementation for monitoring and control synthesis builds on \verb|stlpy| \cite{VK-HL:2022}, we convert STL formulas in code into positive normal form (PNF), where negation $\lnot$ is only applied to predicates without loss of generality \cite{JO-JW:2008}. All simulations were performed on a 2022 Dell Precision 5570 running Ubuntu 22.04.3 LTS\footnote{The code for these examples is available at \url{https://github.com/gtfactslab/Baird_LCSS2024}.}.

\subsection{Interval Monitoring on a Miniature Blimp}
\label{examples:mon}
We illustrate monitoring of a signal taken from an experiment with the GT-MAB miniature blimp hardware platform \cite{QT-JW-ZX-TL:2021}. %
We wish to monitor the %
following two specifications, %
$\varphi = ([x\ y]^\top \notin S) \lor \Diamond_{[0,3\change{/\Delta t}]}\Box_{[0,2\change{/\Delta t}]}([x\ y]^\top \notin S)$ and $\gamma = \Diamond_{[0,3\change{/ \Delta t}]} (-\|[v_x\ v_y\ v_z]^\top\|_2 + 2  \geq 0 )$,
\change{where $\Delta t = 0.2s$ and} the expression $[x\ y]^\top \not\in S$ is written as 
$(x \geq d) \lor (x \leq -d) \lor (y \geq d) \lor (y \leq -d)$,
where $d=1.41m$ is half of the width of a square plus the radius of the blimp. %
The signal is generated from a PD controller with four way-points placed at the coordinates $(0, 1.51)$, $(1.51, 0)$, $(0, -1.51)$ and $(-1.51, 0)$ in the $xy$-plane. %
Due to measurement uncertainty, we add an interval of $\pm 0.075m/s$ to each of the velocity states and an interval of $\pm 0.020m$ to each of the position states. We use a natural inclusion function to handle the nonlinear predicate.

The results of monitoring offline for $\varphi \land \gamma$ is shown in Figure \ref{fig:blimp_main_figure}. %
Note that \ISTL adds minimal overhead beyond what is equivalent to monitoring two signals instead of one due to the use of the \verb|npinterval| Python package \cite{AH-SJ-SC:2023}. Standard STL robustness computations without uncertainty took \change{$0.0035s$ per time step while \ISTL computations with uncertainties took $0.0073s$ per time step,}
which is \change{about $5\%$} more than twice a standard STL robustness computation.

\begin{figure}%
    \centering
    {{\includegraphics[width=\columnwidth]{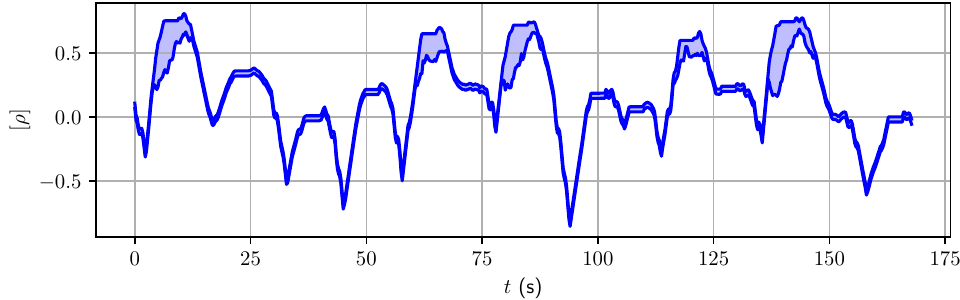} }}%
    \caption{
    The offline computed robustness of $\varphi \land \gamma$. %
    The trajectory is generated from a way-point following PD controller that regularly violates the specification, suggesting the need for controller redesign, for example.}%
    \label{fig:blimp_main_figure}%
    \vspace{-1em}
\end{figure}

\subsection{Robust Control Synthesis for a Linear System}

Consider the following specification adapted from \cite{LB-SC:2023, AD-BH-GF:2014}
\begin{align}\label{eq:the_stl_spec}
    \change{\phi = }&\change{\Diamond_{[0,\frac{4}{\Delta t}]}\big((y \leq 0.7) \lor (y \geq 1.3)\big) \land}\\\nonumber 
    &\change{\big((0.7 \leq y \leq 1.3) \lor \Diamond_{[0,\frac{2}{\Delta t}]}\Box_{[0,\frac{2}{\Delta t}]} (0.7 \leq y \leq 1.3)\big),}
\end{align}
\change{on} the discrete-time double integrator with bounded additive disturbance
\begin{align}\label{eq:double_integrator_system}
    x(t+1)&=\underbrace{\begin{bmatrix}1 & \Delta t\\ 0 & 1\end{bmatrix}}_{A} x(t) + \underbrace{\begin{bmatrix}0 \\ \Delta t\end{bmatrix}}_{B}u(t) + w(t), %
\end{align}
with $x(t) \in \R^2$, for all $t\in\N$, $u(t) \in [-1, 1]$, and $w(t) \in [\ulw, \olw] = [-0.001, 0.001]^2$. Set $\Delta t = 0.25$. 
The horizon of an STL formula $\phi$, denoted $\|\phi\|$, is the number of future time steps of a signal necessary to evaluate an STL formula. Its computation is given in \cite{OM-DN:2004}, yielding $\|\phi\| = 4/0.25 = 16$ time steps for $\phi$ in \eqref{eq:the_stl_spec}. \change{The output of the system is the position, $y := x_1$.}

The requirement $0.7 \leq y \leq 1.3$ may be written as the conjunction of two \change{affine} predicate functions \change{$\alpha y - \beta_1 \geq 0$, and $-\alpha y - \beta_2 \geq 0$} where $\alpha=1$, $\beta_1 = 0.7$, and $\beta_2 = -1.3$. \change{Similarly, the requirement $(y \leq 0.7) \lor (y \geq 1.3)$ can be written as the disjunction of the negation of the same predicates.} Suppose, however, that there is uncertainty in the linear predicates captured with the interval bounds $[\underline{\alpha}, \overline{\alpha}] = [0.95, 1.05]$, $[\underline{\beta}_1, \overline{\beta}_1] = [0.68, 0.72]$, and $[\underline{\beta}_2, \overline{\beta}_2] = [-1.28, -1.32]$ for $\alpha$, $\beta_1$, and $\beta_2$. We wish to minimize the 
\change{control input} such that the robustness is non-negative for all possible disturbances and all possible realizations of the interval predicates. %

\begin{figure}
    \centering
    \includegraphics[width=\columnwidth]{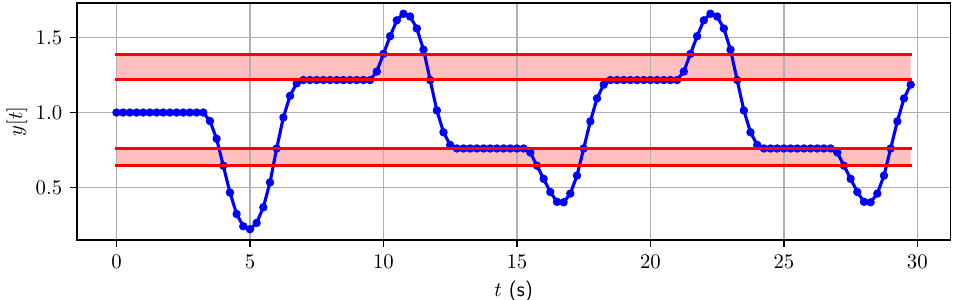}
    \caption{Synthesized control policy \change{output} for a double integrator with uncertain states and predicates. \change{A MILP finds the smallest input in magnitude at each time step} %
    such that the lower bound of the interval robustness is non-negative for all time. The uncertain interval predicates are plotted in red.}
    \label{fig:double_integrator_control_example}
    \vspace{-1em}
\end{figure}

Using Theorem~\ref{thm:soundness_istl_semantics} with the \ISTL specification induced by \eqref{eq:the_stl_spec}, our control objective is achieved by requiring that the lower bound on the interval robustness be non-negative. We use the formulation from \cite[Algorithm 1]{LB-SC:2023}, with slight modifications to accommodate \ISTL.
In particular, we replace the original dynamics \change{constraints} with a new embedding system giving lower and upper bounds $\ulx$ and $\olx$ on the state trajectory which %
\change{over-approximates} the true behavior of the system, \emph{i.e.}, for all possible disturbances, $x(t)\in[\ulx(t),\olx(t)]$.
\change{In general, an embedding system may be constructed for a large class of systems using mixed-monotone systems theory~\cite{AH-SJ-SC:2023}}. %
Therefore, from~\cite[Equation (8)]{LB-SC:2023} using instead interval robustness, we obtain the optimization problem
\begin{gather}\label{eq:minimization_problem}
    \min_{u=\{u(t),\ldots,u(t+N-1)\}}{\change{|u(t)|}}\\
\begin{aligned}
    \text{s.t. } \begin{bmatrix}
        \ulx(\tau+1)\\
        \olx(\tau+1)
    \end{bmatrix} &= \begin{bmatrix}
        A & 0\\
        0 & A
    \end{bmatrix} \begin{bmatrix}
        \ulx(\tau)\\
        \olx(\tau)
    \end{bmatrix} + \begin{bmatrix}
        B\\
        B
    \end{bmatrix} u(\tau) + \begin{bmatrix}
        \ulw\\
        \olw
    \end{bmatrix}\nonumber\\
    \underline{\rho}^\phi([y],\tau)&\geq0, \quad\max{\{t-\|\phi\|, 0\}}\leq\tau\leq t+N-b.&\nonumber
\end{aligned}
\end{gather}
where $A$ and $B$ are the matrices from~\eqref{eq:double_integrator_system}.
We select $N = 16$, $b = 1$ and solve in a receding horizon fashion as a MILP using Gurobi. %
The resulting \change{output sequence initialized at the state $x = [1\ 0]^\top$} is plotted in Figure~\ref{fig:double_integrator_control_example}. \change{In Figure~\ref{fig:double_integrator_control_rho}, we provide an empirical analysis of the tightness of the bounds of \ISTL and its computational burden compared to computing true robustness intervals from MILPs. For this analysis, we consider the case without uncertainty in the predicates and at each time step, we fix a proposed input sequence from the solution of~\eqref{eq:minimization_problem}.

}%

\change{\hspace{0em}\change{Note that the set of predicates for $\phi$ includes two predicates and their complements. Thus, it is not obvious which realization in the interval is the most conservative assumption. The most conservative realization  of an interval predicate function depends on the history and the current time step, \textit{e.g.}, whether to maintain satisfaction the signal must return to within $0.7 \leq y \leq 1.3$ nominally, or leave this range nominally.}}

\begin{figure}
    \centering
    \includegraphics[width=\columnwidth]{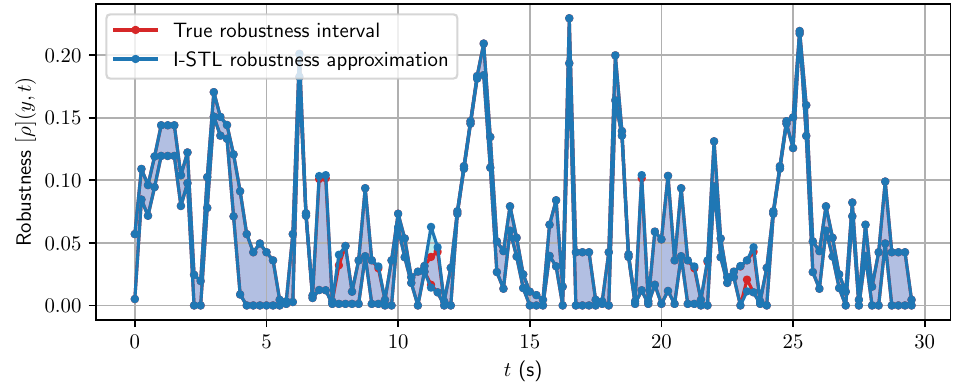}
    \caption{%
    \change{\ISTL robustness vs. exact interval robustness comparison for the double integrator case study without uncertain predicates to avoid bilinear constraints. At each time step, the \ISTL interval robustness for a proposed safe trajectory is plotted in blue. The true interval robustness is plotted in red, solved by maximizing and minimizing $\rho$ for the original system~\eqref{eq:double_integrator_system}, with $w \in [-0.001, 0.001]^2$, $u$ set to the proposed safe input trajectory, and $x$ initialized with historical states. Solving for the true interval robustness with a MILP takes on average $0.015s$ while computing the \ISTL robustness takes on average $0.0014s$. Out of a total of $119$ time steps, the \ISTL robustness interval is minimal for $107$ time steps and is no more than $10\%$ larger than the exact robustness interval for $116$ time steps.}%
    }
    \label{fig:double_integrator_control_rho}
    \vspace{-1em}
\end{figure}

\color{black}
The \ISTL implementation doubles the state dimension and output dimension due to the embedding system, yielding double the dynamics equality constraints. Enforcing predicates in the \ISTL constraint requires additional binary variables. When applying the mixed-integer encoding from \cite{VK-HL:2022} with affine interval predicates, the expression
\[
    \alpha^\top y(t) - b + M (1-z) \geq \rho(t) 
\]
is modified by using the minimal inclusion function for $[\alpha]^\top[y]$ %
(where $p$ is the dimension of the output) to
\[ 
    \sum_{j=1}^p\min\{\ul{\alpha}_j\uly_j, \ul{\alpha}_j\oly_j, \ol{\alpha}_j\uly_j, \ol{\alpha}_j\oly_j\} - \ulb + M(1-z) \geq \ul{\rho}(t), 
\]
which introduces extra binary variables. Otherwise, 
the number of constraints used to encode \ISTL robustness remains the same. %
Over a \change{$119$ time step} trajectory in simulation, the \ISTL implementation takes 
\change{$0.46s$} to compute a safe control input per time step, while the  STL implementation without disturbances and \change{without} uncertain predicates takes 
\change{$0.15s$ per time step}.

\section{Conclusion}

We presented an interval extension of STL that uses inclusion functions to give sound interval overestimates of STL robustness. \change{Using the} \verb|npinterval| package, \ISTL can be efficiently used for robust monitoring or control synthesis with minimal code adaptation and computation time that is approximately twice that of the standard STL counterpart.
\change{In contrast, computing exact minimal and maximal robustness bounds is considerably more computationally intensive as demonstrated in the case study.}

\bibliographystyle{ieeetr}
\bibliography{FACTS.bib,SC.bib}

\end{document}